\newcommand{\Desc}[2]{\State \makebox[2em][l]{#1}#2}
\newcommand{\multiline}[1]{%
  \begin{tabularx}{\dimexpr\linewidth-\ALG@thistlm}[t]{@{}X@{}}
    #1
  \end{tabularx}
}
\DeclareMathOperator{\sinc}{sinc}
\newcommand{\E}{\mathbb{E\,}}
\newcommand{\R}{\mathbb{R}}
\newcommand{\atantwo}{\text{atan2}}
\newtheorem{lemma}{Lemma}
\newtheorem{remark}{Remark}
\newtheorem{theorem}{Theorem}
\DeclareMathOperator*{\argmin}{arg\,min}
\newcommand{\python}{\mbox{\textsc{Python}}}
\title{\LARGE \bf
A Control Approach for Nonlinear Stochastic State Uncertain Systems with Probabilistic Safety Guarantees}
\author{Mohammad S. Ramadan,\,Mohammad Alsuwaidan,\,Ahmed Atallah,\, and Sylvia Herbert
\thanks{The authors are with the Department of Mechanical and Aerospace Engineering, University of California San Diego, La Jolla, CA, USA. E-mails: {\tt\small \{msramada, mnalsuwa, aatallah, sherbert\}@ucsd.edu}}}
\begin{document}

\maketitle
\thispagestyle{empty}
\pagestyle{empty}

\begin{abstract}
This paper presents an algorithm to apply nonlinear control design approaches in the case of stochastic systems with partial state observation. Deterministic nonlinear control approaches are formulated under the assumption of full state access and, often, relative degree one. We propose a control design approach that first generates a control policy for nonlinear deterministic models with full state observation. The resulting control policy is then used to build an importance-like probability distribution over the space of control sequences which are to be evaluated for the true stochastic and state-uncertain dynamics. This distribution serves in the sampling step within a random search control optimization procedure, to focus the exploration effort on certain regions of the control space. The sampled control sequences are assigned costs determined by a prescribed finite-horizon performance and safety measure, which is based on the stochastic dynamics. This sampling algorithm is parallelizable and shown to have computational complexity indifferent to the state dimension, and to be able to guarantee safety over the prescribed prediction horizon. A numerical simulation is provided to test the applicability and effectiveness of the presented approach and compare it to a certainty equivalence controller. 
\end{abstract}
\section{Introduction}
The challenge with output feedback control in stochastic systems is the interaction between control and estimation. Stochastic control deals with this interaction by introducing the concept of the information state \cite{kumar2015stochastic,striebel1965sufficient}, which addresses the propagation of uncertainty together with the state estimate, instead of regarding this estimate as the true state value as in certainty equivalence control. The introduction of the information state results in a significant increase in the complexity and dimensionality of the control problem and leads to algorithms which are prohibitive computationally. On the other hand, to maintain computational tractability, nonlinear control design approaches are typically formulated for deterministic and full-state feedback systems \cite{khalil2002nonlinear}. This difference in the problem formulation makes it difficult to use nonlinear design approaches in stochastic settings. 

In this paper, we propose a random search algorithm to solve the stochastic nonlinear optimal control problem. For computational efficiency, we first employ a nonlinear controller designed for a surrogate deterministic dynamic model. The nonlinear controller and the surrogate dynamics are used as a constructor of an importance-like distribution which reduces the searching space and guides the effort of sampling finite-horizon control sequences. These sequences are then evaluated with respect to a prescribed finite-horizon performance and safety measure when applied to the true stochastic dynamics. The first control action, of the sequence with the highest performance and safety measure is applied at the current time in a receding-horizon fashion.

Out of the many nonlinear design approaches to test the proposed algorithm, this paper uses control barrier functions (CBF) to generate the deterministic control policy. CBFs are considered the ``dual'' of control Lyapunov functions, but they enforce safety rather than stability. Existing literature about CBFs has limited answers to system uncertainties or output feedback, two characteristics that often exist in real-world systems. A method addressing stochastic systems within the CBF framework have been developed in \cite{cosner2023robust}, but for the full state-feedback case. For the case of partial state observation, or output feedback, the CBF-based approaches are typically limited in assumptions to linear systems \cite{wang2021chance}. Furthermore, time-discretization, and possibly state and input space  discretization, are employed while relying on guarantees achieved in continuous settings. This weakens the validity of the inherited guarantees and leads to the conservativeness in the adjusting safe sets by additional safety margins \cite{breeden2021control}.


The key idea of this paper is to use the policy generated by the deterministic CBF controller to guide a random search optimization for the stochastic nonlinear control problem. This bears resemblance to importance sampling, an important technique in the field of Monte Carlo integration \cite{liu2001monte}. This technique approximates an integral by sampling the importance distribution: a chosen probability distribution that is ideally concentrated in regions of high contribution to the integral in-hand. The analogy between our proposed algorithm, and importance distribution, is in constructing a distribution in the control space that is concentrated in regions likely to be of low cost and guaranteed safety. This is similar to sampling methods which rely on an information theoretic construction of the importance distribution \cite{williams2018robust} for full state-feedback systems. However, our approach does not require full state-feedback. Moreover, in contrast to sampling approaches based on Monte Carlo integration, our approach does not rely on the laws of large numbers; a finite number of samples/scenarios is enough to guarantee safety with high probability.

\section{Problem Formulation} \label{section: ProblemFormulation} 
Consider the following discrete-time state-space model
\begin{subequations}
\begin{align}
x_{k+1}&=f(x_k,u_k,w_k),\label{eq:NonSys_a}\\
y_k&=g(x_k,v_k),\label{eq:NonSys_b}
\end{align}\label{eq:NonSys}
\end{subequations}
where the state is $x_k\in\mathbb R^{r_x}$, control input $u_k\in\mathbb R^{r_u}$, and exogenous disturbances $w_k$ and $v_k$. The initial state $x_0$ has distribution $p_0$ represented by a set of $L$ equiprobable particles $\Xi=\{x_{0\mid 0, j} \mid j=1,\hdots,L\}$. The stochastic disturbance processes, $\{w_k\}$ and $\{v_k\}$, are white, possess known densities $\mathcal W$ and $\mathcal{V}$, and are independent from each other and each from $x_0$.

We assume we are given the following cost function
\begin{align*}
    J(p_0, \{ u_k\}_{k=0}^{N-1}) =\E \left (\gamma^N\ell_N(x_N) + \sum_{k=0}^{N-1} \gamma^k \ell_k(x_k,u_k) \right),
\end{align*}
then, the objective is to find the control sequence $\{ u_k\}_k^{N-1}$ that solves the following optimization problem:
\begin{equation}
\begin{aligned} \label{eq:optimizationProb}
    \min_{\{ u_k\}_{k=0}^{N-1}} &J(p_0, \{ u_k\}_{k=0}^{N-1}),\\
    &\text{subject to, for all }k,\\
    & \mathbb P (x_k \in \mathcal C) \geq 1-\epsilon,\, u_k \in \mathbb U,
\end{aligned}
\end{equation}
where: $\gamma \in (0,1]$ is the discount factor; $N$ is the prediction horizon; the expectation $\E$ is with respect to a probability space of: elements $(x_0,\{w_k\}_{k=0}^{N-1})$, the suitable product Borel $\sigma-$field, and the probability measure $\mathbb P$; $\ell_k$s are the stage cost functions and $\E \lvert \ell_k(x_k,u_k) \rvert<\infty$\footnote{Possible assumptions to satisfy this: $\ell_k$s are bounded; or for each $u_k$, $\ell_k(\cdot,u_k)$ is positive and bounded by a quadratic function in $x_k$, the covariances of $w_k$, $x_0$ are finite and $f$ is uniformly Lipschitz.}, for all $u_k$; $\epsilon \in [0,1)$ is the acceptable constraint violation probability and is typically small $\epsilon \approx 0$; the set $\mathcal C$ can be seen as the safe set and is not to be violated in a probability less than $1-\epsilon$.

In the case of partial state observation as in \eqref{eq:NonSys_b}, where in general $y_k \neq x_k$, the concept of the information state appears \cite{kumar2015stochastic}, which is the conditional state density. Designing a control law over the function of all possible densities was concluded very early in the control literature to be prohibitive \cite{striebel1965sufficient}. One of the common approaches is to ignore the state uncertainty and use an estimate of the state as the true state in a feedback control law. This approach is valid only in simple examples like linear-quadratic-Gaussian (LQG) control, where the separation principle holds \cite{aastrom2012introduction}. In this paper, our approach to solve \eqref{eq:optimizationProb} takes into account the state uncertainty and the disturbances, and provides probabilistic guarantees for the original system \eqref{eq:NonSys}.

\section{Prerequisites}
\label{section: Preliminaries} 
Our methodology consists of two major components: A CBF controller (for the surrogate deterministic dynamics) and a particle filter to track the state conditional density. This section gives a brief introduction of each, before we introduce the algorithm in the next section.

\subsection{Control Barrier Functions}
Although the dynamics used in the CBF and safety literature are mostly in continuous-time, the discrete-time dynamics can be obtained by some integrator or simply a zero-order-hold. Thus, we first assume deterministic dynamics and full state accessibility to derive a CBF controller through a surrogate continuous-time formulation. Then we apply a zero-order-hold, accounting for partial observations and exogenous disturbances, to get back the formulation \eqref{eq:NonSys}.

The following continuous-time nonlinear dynamic system conforms to assumptions concerning the existence and uniqueness of its solutions,
\begin{align}
    \dot x = F(x,u), \label{CBF:dynamics}
\end{align}
where $x \in \R ^{r_x}$ is the state and $u \in \mathbb U \subset \R^{r_u}$ is the control input. This system is to be made safe, in the sense of remaining in a control invariant set $\mathcal{C} \subset \R^{r_x}$ that is a subset of the state constraints. 

Control barrier functions are real-valued functions over the state space that encode both the state constraints and long-term effects of the dynamic system.  A CBF has two key properties relevant to safety: its value at a given state provides a measure of safety, and its gradient informs the set of control actions that will preserve safety \cite{ames2019control,chen2020guaranteed}. 
Let $h(x)$ be a smooth real-valued function such that  $\mathcal{C}$ is the superlevel set of $h$, that is, $\mathcal{C} = \{ x \in \R^{r_x} \mid h(x) \geq 0\}$, and let $\alpha: \R \to \R$ be a class $\mathcal K$ function \cite[p.~144]{khalil2002nonlinear}. Then, the function $h$ is called a CBF if 
\begin{align}
    \exists u \in \mathbb U \text{ such that } \frac{\partial h}{\partial x} \cdot F(x,u) = \dot h \geq -\alpha(h),\, \forall x. \label{eq:CBF_condition}
\end{align}
The above condition enforces that the system remains within the control invariant set $\mathcal C$, and therefore preserves non-negative safety values.

The CBF is typically paired with a background performance controller that is designed according to a different safety-blind objective \cite{buch2021robust,chen2020guaranteed}. Online, the CBF acts as a safety filter to minimally modifies the performance control in order to maintain safety:
\begin{equation}
\begin{aligned}
    &u^\star(x) = \argmin_{u \in \mathbb U}\,\lVert u-u_0 \rVert^2, \\
    &u\text{ satisfies the inequality in \eqref{eq:CBF_condition}}. \label{eq:uStar_equation}
\end{aligned}
\end{equation}
The variable $u_0$ represents the control value chosen by this background controller, and $u^\star$ is the modified safe control. If the dynamic model \eqref{CBF:dynamics} is affine in $u$, \eqref{eq:uStar_equation} is a quadratic program and can be efficiently solved online.

\subsection{Particle filtering}
For general nonlinear dynamics of the form \eqref{eq:NonSys}, the particle filter is a convenient approximation to the Bayesian filter \cite{doucet2000sequential}. We continue assuming, as in \eqref{eq:NonSys}, that the current time ($k=0$) state filtered density $p_0$ is provided by a set of equiprobable particles $\Xi=\{\xi^j_0\in\mathbb R^{r_x}, j=1,\dots,L\}$. The future times' particles are then achieved by the particle filter algorithm. 

The dynamics in \eqref{eq:NonSys} have an equivalent representation using transition and measurement densities \cite{schon2011system}: $p(x_{k+1}\mid x_k,u_k)$ and $p(y_k \mid x_k)$\footnote{The notation $p(a\mid b,c)$ denotes the density function of $a$ given the values $b$ and $c$.}, respectively. These densities are characterized by $\mathcal{V}$ and $\mathcal{W}$, and the dynamics $f$ and $g$ in \eqref{eq:NonSys}. 

The initial state density $p_0$ is defined by the particle set $\Xi=\{ x_{0\mid 0, j} \mid j=1,\hdots,L\}$, which is assumed to be given. The two major steps of a basic (bootstrap) particle filter are
\begin{enumerate}
    \item The time update step, in which the set of particles $\{x_{k|k,j}\}_{j=1}^L$ are propagated through the state equation \eqref{eq:NonSys_a}, with a chosen $u_k$, using $L$ realisations of the disturbance $w_{k,j}\sim \mathcal{W}$. The resulting particles are denoted $\{x_{k+1|k,j}\}_{j=1}^L$.
    \item\label{Step2} The measurement update step, wherein if $y_{k+1}$ becomes available, consists of computing the importance weights $\{\Omega_{k+1,j}\}_{j=1}^L$ by:
        $$\Omega_{k+1,j}=\frac{p(y_{k+1}\mid x_{k+1|k,j})}{\sum_{j=1}^{L} p(y_{k+1}\mid x_{k+1|k,j})},\,j=1,\hdots,L.$$
\end{enumerate}
In this paper, we implement a resampling step after Step-(\ref{Step2}), for every time step. The resulting equiprobable particles are denoted $\{x_{k+1 \mid k+1}\}_{j=1}^L$. The conditional mean at any time $k$ can be found by evaluating the sample average $$\E[x_k\mid y_0,\hdots,y_k]\approx \frac{1}{L}\sum_{j=1}^{L}x_{k|k,j}.$$

\section{Control importance-like distribution for random search}
\label{section:main } 

Our random search approach to solve \eqref{eq:optimizationProb} is influenced by importance sampling in Monte Carlo literature, which prioritizes sampling regions in the integration space of higher contribution. We construct an importance-like distribution $\mathcal{U}$ of control sequences $\{u_k\}_{k=0}^{N-1}$, that is ideally concentrated at regions with high confidence to contain feasible, and possibly optimal solutions of \eqref{eq:optimizationProb}.

Sampling the distribution $\mathcal{U}$ is done through a simulation-based algorithm. We first simulate the sequence $\{x_k'\}_k$ characterized by an initial state $x_0 \sim p_0$ and a CBF controller applied in feedback $u^\star_k = \pi(x_k)$ obtained from \eqref{eq:uStar_equation}, in
\begin{align}
    x_{k+1}'=f(x_k',\pi(x_k'),w_k'),\, x_0' \sim p_0,\,w_k'\sim \mathcal{W}, \label{eq:x_k'def}
\end{align}
 where the disturbance $\{w_k'\}_k$ is an i.i.d. realization of $\{w_k\}_k$. 
 
 This sampling procedure defines a distribution of the state sequence $\{x_k'\}_{k=0}^{N-1}$ which in turn defines a distribution of the control sequence $\{u_k^\star\}_{k=0}^{N-1}=\{\pi(x_k')\}_{k=0}^{N-1}$. We call this the importance distribution $\mathcal{U}$. Notice that this distribution is a function of: the original dynamics \eqref{eq:NonSys_a}, the control law $\pi(\cdot)$ (which is based on the chosen surrogate dynamics \eqref{CBF:dynamics}), and the initial state density $p_0$. Our Control Importance Distribution Algorithm (CIDA), shown in Algorithm~\ref{algorithm:CIDA}, outlines\footnote{The function $1(\cdot)$, used in Algorithm~\ref{algorithm:CIDA}, is the indicator function and returns $1$ if $\cdot$ is true and $0$ if false.} this random search procedure we propose to solve \eqref{eq:optimizationProb}.

\begin{algorithm}[h]
\caption{Control Importance Distribution Algorithm (CIDA)}\label{algorithm:CIDA}
\begin{algorithmic}[0]
\Input
\Desc{$p_0$}{\multiline{current time, $k=0$, particle filter's density described by equiprobable particles $\Xi$}}
\Desc{$\pi$}{\multiline{CBF safe controller \eqref{eq:uStar_equation} of the surrogate deterministic dynamics defined in \eqref{CBF:dynamics}}}
\Desc{M}{number of simulations per control sequence}
\Desc{$\alpha$}{acceptable statistical violation rate, $\alpha \in[0,\epsilon) $}
\Desc{R}{\multiline{rollouts number, i.e. the number of control sequence random search trials}}
\EndInput
\Output
\Desc{$u^ \#$}{current time control input}
\EndOutput
\For{$i=1,2,\hdots,R$}
\State  \multiline{Using $p_0$ and $\pi$, sample $\{x_k'\}_{k=0}^{N-1}$ using \eqref{eq:x_k'def}, then record $\{u_k^\star\}_{k=0}^{N-1}=\{\pi(x_k')\}_{k=0}^{N-1}$;}
\State  \multiline{ using $\{u_k^\star\}_{k=0}^{N-1}$, $M-$(iid) simulations of $\{w_k\}_{k=0}^{N-1}$, denoted $(\{w_{k,q}\}_{k=0}^{N-1})_{q=1}^M$, and the state equation \eqref{eq:NonSys_a}, record the $M-$trajectories $(\{x_{k,q}\}_{k=0}^{N})_{q=1}^M$ such that}
\begin{align*}
    x_{k+1,q}&=f(x_{k,q},u_k^\star,w_{k,q});
\end{align*}
\If{\begin{align} \label{eq:statisticalFeasibility}
a_k=\frac{1}{M}\sum_{q=1}^M 1(x_{k,q} \in \mathcal{C}) \geq 1-\alpha, \forall k\end{align} \hskip 4mm}
\State  $\{u_k^\star\}_{k=0}^{N-1}$ is feasible, and calculate $J_i$
\begin{align*}
    J_i =\frac{1}{M}\sum_{q=1}^{M} \left (\gamma^N\ell_N(x_{N,q}) + \sum_{k=0}^{N-1} \gamma^k \ell_k(x_{k,q},u_k^\star) \right );
\end{align*}
\Else 
\State  $\{u_k^\star\}_{k=0}^{N-1}$ is infeasible, and $\,J_i \gets \infty$;
\EndIf
\EndFor
\State If finite, find $\min_{i} J_i$ and its corresponding control sequence $\{u_k^\star\}_{k=0}^{N-1}$, then set $u^\# \gets u_0^\star$;
\State Propagate particle filter to next time and reset $k \gets 0$;
\end{algorithmic}
\end{algorithm}

\subsection{Computational complexity of CIDA}
Notice that in the original problem formulation \eqref{eq:optimizationProb}, the safety condition $\{x_k \in \mathcal{C}\}$ has to be satisfied with probability $\geq 1- \epsilon$. However, Algorithm~\ref{algorithm:CIDA} imposes the safety condition statistically for the $M-$samples, and with rate $\geq 1-\alpha$. Relying only on the laws of large numbers for guarantees demands $M \to \infty$. Similar to \cite{ramadan2023state,luedtke2008sample}, and the randomized sample/scenario framework \cite{campi2019scenario}, we can provide acceptable safety guarantees with relatively small\footnote{In the numerical simulations section we use $M=150$.} number of samples. 
\vskip 3mm
\begin{lemma} \label{Hoeffdings}
    (\textbf{Hoeffding's inequality \cite{hoeffding1994probability}}). For independent random variables $Z_q,\,q=1,\hdots,\bar M$, $\mathbb{P}(Z_q \in [a_q,b_q])=1$, $a_q\leq b_q$, for all $t\geq0$
    \begin{align*}
        \mathbb{P}\left(\sum_{q=1}^{\bar M} (Z_q-\E\,Z_q) \geq t\bar M\right) \leq \exp \left(-\frac{2\bar M^2t^2}{\sum_{q=1}^{\bar M}(b_q-a_q)^2}\right)
    \end{align*}
\qed
\end{lemma}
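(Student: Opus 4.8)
Since the claim is the classical Hoeffding inequality, the plan is to reproduce the standard Chernoff-bound argument. Write $S=\sum_{q=1}^{\bar M}(Z_q-\E\,Z_q)$ and note each centered summand $X_q := Z_q-\E\,Z_q$ satisfies $\E\,X_q=0$ and $X_q\in[a_q-\E\,Z_q,\,b_q-\E\,Z_q]$ almost surely, an interval of the same length $b_q-a_q$. First I would apply the exponential Markov inequality: for any $\lambda>0$,
\begin{align*}
\mathbb{P}(S\geq t\bar M)=\mathbb{P}\!\left(e^{\lambda S}\geq e^{\lambda t\bar M}\right)\leq e^{-\lambda t\bar M}\,\E\,e^{\lambda S}.
\end{align*}
By independence of the $Z_q$ (hence of the $X_q$), $\E\,e^{\lambda S}=\prod_{q=1}^{\bar M}\E\,e^{\lambda X_q}$, so the whole problem reduces to bounding the moment generating function of a single bounded, mean-zero random variable.

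The key step — and the one I expect to be the only nontrivial obstacle — is Hoeffding's lemma: if $X$ is mean zero with $X\in[a,b]$ a.s., then $\E\,e^{\lambda X}\leq \exp\!\big(\lambda^2(b-a)^2/8\big)$. I would prove it by convexity of $x\mapsto e^{\lambda x}$, which gives $e^{\lambda x}\leq \frac{b-x}{b-a}e^{\lambda a}+\frac{x-a}{b-a}e^{\lambda b}$ on $[a,b]$; taking expectations and using $\E\,X=0$ yields $\E\,e^{\lambda X}\leq \frac{b}{b-a}e^{\lambda a}-\frac{a}{b-a}e^{\lambda b}=:e^{\phi(\lambda)}$. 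A direct computation shows $\phi(0)=\phi'(0)=0$ and $\phi''(\mu)\leq \tfrac14(b-a)^2$ for all $\mu$ (the second derivative is $p(1-p)(b-a)^2$ for a suitable $p\in[0,1]$, maximized at $p=\tfrac12$), so Taylor's theorem with remainder gives $\phi(\lambda)\leq \lambda^2(b-a)^2/8$. Applying this to each $X_q$ with interval length $b_q-a_q$ gives $\E\,e^{\lambda S}\leq \exp\!\big(\tfrac{\lambda^2}{8}\sum_{q}(b_q-a_q)^2\big)$.

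Finally I would substitute back and optimize over $\lambda$: with $C:=\sum_{q=1}^{\bar M}(b_q-a_q)^2$,
\begin{align*}
\mathbb{P}(S\geq t\bar M)\leq \exp\!\left(-\lambda t\bar M+\frac{\lambda^2}{8}C\right),
\end{align*}
and the exponent, a convex quadratic in $\lambda$, is minimized at $\lambda^\star=4t\bar M/C>0$, where it equals $-2t^2\bar M^2/C$. This yields $\mathbb{P}(S\geq t\bar M)\leq \exp\!\big(-2\bar M^2 t^2/\sum_{q}(b_q-a_q)^2\big)$, which is exactly the stated bound; the case $t=0$ is trivial since the right-hand side is $1$. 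The one place to be careful is the convexity/Taylor estimate for $\phi''$, as everything else is bookkeeping.
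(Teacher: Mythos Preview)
Your argument is correct and is exactly the standard Chernoff--Hoeffding proof: exponential Markov, independence to factor the MGF, Hoeffding's lemma via convexity plus the $\phi''\le(b-a)^2/4$ bound, and optimization over $\lambda$. There is nothing to compare against, however, because the paper does not prove this lemma at all --- it is stated with a citation to \cite{hoeffding1994probability} and closed immediately with \qed{} --- so your write-up simply supplies the classical proof that the authors take for granted.
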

\begin{theorem} \label{Theorem:NumberOfSamples}
   For any $\delta \in (0,1)$, if 
    \begin{align} \label{Equation:MlowerBound}
        M \geq \frac{1}{2(\epsilon-\alpha)^2}\log \left( \frac{1}{\delta}\right),
    \end{align}
then, a feasible control sequence $\{u_k\}_{k=0}^{N-1}$ w.r.t. Algorithm~\ref{algorithm:CIDA}, i.e. satisfies the inequalities \eqref{eq:statisticalFeasibility}, is feasible with respect to the original optimization problem \eqref{eq:optimizationProb}, with probability $1-\delta$.
\end{theorem}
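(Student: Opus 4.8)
The plan is to reduce the statement to a single application of Hoeffding's inequality (Lemma~\ref{Hoeffdings}) at each time step. First I would fix a control sequence $u=\{u_k\}_{k=0}^{N-1}$ --- more precisely, condition on the outcome of the CBF rollout in Algorithm~\ref{algorithm:CIDA} that produces it --- so that the $M$ simulated trajectories $(\{x_{k,q}\}_k)_{q=1}^M$ are i.i.d.\ across $q$. For each $k$ the indicators $Z_{k,q}:=1(x_{k,q}\in\mathcal C)$ are then i.i.d.\ Bernoulli random variables whose common mean equals $p_k:=\mathbb P(x_k\in\mathcal C)$, the exact quantity constrained in \eqref{eq:optimizationProb}, and the statistical rate $a_k$ of \eqref{eq:statisticalFeasibility} is $\frac{1}{M}\sum_{q=1}^M Z_{k,q}$.

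The key observation is that the failure event at time $k$ --- Algorithm~\ref{algorithm:CIDA} declares the test \eqref{eq:statisticalFeasibility} satisfied ($a_k\ge 1-\alpha$) while the true constraint fails ($p_k<1-\epsilon$) --- forces the empirical mean to exceed the true mean by more than $\epsilon-\alpha>0$ (the gap is positive since $\alpha\in[0,\epsilon)$). As $Z_{k,q}\in[0,1]$, Lemma~\ref{Hoeffdings} with $\bar M=M$, $[a_q,b_q]=[0,1]$ and $t=\epsilon-\alpha$ yields
\[
\mathbb P\!\left(\sum_{q=1}^M (Z_{k,q}-p_k)\ge M(\epsilon-\alpha)\right)\le\exp\!\big(-2M(\epsilon-\alpha)^2\big),
\]
so, conditionally on $u$, the failure event at time $k$ has probability at most $\exp(-2M(\epsilon-\alpha)^2)$.

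To close, observe that the lower bound \eqref{Equation:MlowerBound} is exactly $2M(\epsilon-\alpha)^2\ge\log(1/\delta)$, i.e.\ $\exp(-2M(\epsilon-\alpha)^2)\le\delta$. Hence, for every fixed $u$, the probability that $u$ passes the test at time $k$ yet violates $\mathbb P(x_k\in\mathcal C)\ge 1-\epsilon$ is at most $\delta$; since this bound is uniform in $u$, it persists after averaging over the random rollout that generated $u$. Because the CBF controller guarantees $u_k\in\mathbb U$ by construction, a control sequence declared feasible by Algorithm~\ref{algorithm:CIDA} is feasible for \eqref{eq:optimizationProb} with probability at least $1-\delta$. (If one wants the conclusion to hold simultaneously at all $k=0,\dots,N$, a union bound over the at most $N+1$ constraints replaces $\delta$ by $\delta/(N+1)$, equivalently $\log(1/\delta)$ by $\log((N+1)/\delta)$ in \eqref{Equation:MlowerBound}.)

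The only real subtlety I expect is the bookkeeping of the two independent randomness sources --- the sampled control sequence $u$ and the $M$ fresh scenario noises used to score it --- so that Hoeffding is applied to genuinely i.i.d.\ Bernoulli variables with a tail bound that does not depend on $u$; once the conditioning is arranged this way, everything else is a one-line substitution. A secondary point is the modeling decision of whether to state the guarantee per time step, as above, or over the whole horizon via the union bound.
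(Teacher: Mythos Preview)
Your core argument---fix $u$, apply Hoeffding to the i.i.d.\ Bernoulli indicators $Z_{k,q}=1(x_{k,q}\in\mathcal C)$ with mean $p_k$---is exactly the paper's. The only real discrepancy is in the final gluing step, where you suggest a union bound over $k$ is needed for the horizon-wide conclusion. It is not, and the paper avoids it.

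The point is that the two feasibility notions have opposite logical structure: ``feasible in Algorithm~\ref{algorithm:CIDA}'' is the conjunction $\bigcap_k\{a_k\ge 1-\alpha\}$, while ``infeasible for \eqref{eq:optimizationProb}'' is the disjunction $\bigcup_k\{p_k<1-\epsilon\}$. The paper argues the contrapositive: if $u$ is infeasible for \eqref{eq:optimizationProb}, pick any single $m$ with $p_m<1-\epsilon$; then
\[
\{\text{Algorithm-feasible}\}=\bigcap_k\{a_k\ge 1-\alpha\}\subset\{a_m\ge 1-\alpha\},
\]
and one Hoeffding application at $k=m$ bounds the right-hand event by $\exp(-2M(\epsilon-\alpha)^2)\le\delta$. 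Restricting to the worst time step only \emph{enlarges} the event being bounded, so no union over $k$ is required. Your parenthetical replacement of $\log(1/\delta)$ by $\log((N+1)/\delta)$ would therefore weaken \eqref{Equation:MlowerBound} unnecessarily. Once you insert this observation, your proof and the paper's coincide.
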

\begin{proof}
Given an initial density $p_0$, fix a control sequence $\{u_k\}_{k=0}^{N-1}$, where $u_k \in \mathbb U$, and let $\{x_k\}_{k=0}^{N}$ be the resulting process characterized by \eqref{eq:NonSys_a}. Let $(\{x_{k,q}\}_{k=0}^{N})_{q=1}^M$ be defined as in Algorithm~\ref{algorithm:CIDA}, using this $\{u_k\}_{k=0}^{N-1}$.

Suppose $\{u_k\}_{k=0}^{N-1}$ is infeasible with respect to \eqref{eq:optimizationProb}. Then $\exists m\in \{1,\hdots,N\}$, a minimizer of $G_k = \mathbb P(x_k \in \mathcal{C})$ and hence $G_m = \mathbb P(x_m \in \mathcal{C})<1-\epsilon$. Define $Y_q = {1}(x_{l,q} \in \mathcal{C})$. Then
\begin{align*}
    \E Y_q = \mathbb P(x_{l,q} \in \mathcal{C}) = \mathbb P(x_l \in \mathcal{C})<1-\epsilon
\end{align*}
Since $x_{l,q}$ is an iid sample of $x_{l}$. 

Now we check the statistical feasibility of this control sequence, according to Algorithm~\ref{algorithm:CIDA},
\begin{align*}
        &\mathbb{P}\left(\left\{\frac{1}{M}\sum_{q=1}^M{1}(x_{k,q}\in\mathcal{C}) \geq 1-\alpha,\,\forall k\right \}\right),\\
        &\leq \mathbb{P}\left(\left\{\frac{1}{M}\sum_{q=1}^M{1}(x_{l,q}\in\mathcal{C}) \geq 1-\alpha\right \}\right),\\
        &\leq \mathbb{P}\left(\frac{1}{M}\sum_{q=1}^M \left ( Y_q - \E Y_q \right)\geq -1+\epsilon+1-\alpha\right),\\
        &\leq \mathbb{P}\left(\sum_{j=1}^M \left ( Y_q - \E Y_q \right)\geq M(\epsilon-\alpha)\right),\\
        &\leq \exp \left(-2 M(\epsilon-\alpha)^2\right),
\end{align*}
where the second inequality follows from the definition of $Y_q$, and the last inequality from Hoeffding's (Lemma~\ref{Hoeffdings}). If we pick $\delta \in (0,1)$, and pick an $M$ that satisfies \eqref{Equation:MlowerBound} with strict inequality, we get
\begin{align*}
   &\mathbb{P}\left(\left\{\frac{1}{M}\sum_{q=1}^M{1}(x_{k,q}\in\mathcal{C}) \geq 1-\alpha,\,\forall k\right \}\right),\\
   &\leq \exp \left(-2 M(\epsilon-\alpha)^2\right) < \delta.
\end{align*}
Therefore, if $\{u_k\}_{k=0}^{N-1}$ is infeasible w.r.t. \eqref{eq:optimizationProb}, then it is infeasible w.r.t. Algorithm~\ref{algorithm:CIDA}, with probability $\geq 1-\delta$. The contrapositive of this statement proves this theorem.
\end{proof}

Algorithm~\ref{algorithm:CIDA} has a computational complexity of $\mathcal{O}(MR)$, per time-step. By choosing $R$ to be of $\mathcal{O}(L)$, where $L$ represents the number of particles in the particle filter, the overall complexity of the algorithm, combined with the particle filter, becomes $\mathcal{O}(L \log L)$ per time-step. This algorithm is parallelizable across two dimensions: $i$, the random search trial number; and $q$, the index of the $M$ resulting trajectories. Using a graphics processing unit (GPU) can potentially reduce the required computation time.

\begin{remark}
    If $x_0' = \E x_0$ and $w_k'=0$ are used in \eqref{eq:x_k'def}, the resulting control sequence, call it $\{\bar u_k\}_k$, is the certainty equivalence control \cite{aastrom2012introduction}. Simply augmenting this sequence as one of the rollouts in Algorithm~\ref{algorithm:CIDA} guarantees that this algorithm is as good or better, in performance and safety over the prediction horizon, than the typically used certainty equivalence control.
\end{remark}
\begin{remark}
    The distribution of $w_k'$ determines the ``breadth'' of the searching space around the certainty equivalence control. Although $w_k'$ is presented as an i.i.d. sample of $w_k$, it can be sampled according to a different distribution. When there is no modeling mismatch: i.e. $w_k=0$ and $x_0$ is known, we expect the certainty equivalence control (with $w_k'=0$) to be optimal w.r.t. \eqref{eq:optimizationProb}. The more the uncertainties and modeling mismatches, the more the optimal solution will possibly depart from the certainty equivalence one, the more breadth we need for the distribution of $w_k'$, such that the optimal solution is still in the support of $\mathcal{U}$.
\end{remark}

\section{Numerical simulation: autonomous vehicle navigation and obstacle avoidance}\label{Section:examples}
An autonomous vehicle is modelled as a discrete-time stochastic unicycle model, with partial state observation:
\begin{align}
    \xi_{k+1}&=f(\xi_k,\omega_k,q_k), \nonumber \\
    &= \xi_k + \tau
    \begin{bmatrix}
    V \sinc (\frac{\omega_k \tau}{2})\cos(\theta_k+\frac{\omega_k \tau}{2})\\
    V \sinc (\frac{\omega_k \tau}{2})\sin(\theta_k+\frac{\omega_k \tau}{2})\\
    \omega_k
    \end{bmatrix}+w_k,\label{eq:stateDynamics}\\
    z_k&=g(\xi_k,v_k)= 
    \begin{bmatrix}
    x_k\\
    y_k
    \end{bmatrix}+ v_k\label{eq:measurementDynamics},
\end{align}
where $\tau=0.2\,s$ is the time interval of one step, $\sinc(\cdot)=\sin(\cdot)/\cdot$ is the sinc function, $\xi_k=(x_k,y_k,\theta_k)^T$ is the state vector, $\theta_k$ is the heading angle of the vehicle, counter clockwise from the positive $x-$axis, and $\omega_k \in \mathbb U = [-\pi,\pi]\,s^{-1}$ is the control input, which is the rate of change of heading angle $\theta_k$. We assume a constant vehicle speed $V = 5\,m/s$. The measurement can be seen as a noisy position measurement. The random variables $\zeta_0,w_k$ and $v_k$ follow the same assumptions for \eqref{eq:NonSys}, and: $\mathcal{W}=\mathcal{N}(0,\text{diag}(.2,.2,.1)),\,\mathcal{V}=\mathcal{N}(0,\text{diag}(.1,.1,.1))$ and the particles $\Xi$ representing $p_0$, the density of $\zeta_0$, are sampled according to a density $\mathcal{N}((10,0,-\pi/2)^T,\text{diag}(.2,.2,.2))$ \footnote{The notation: $\mathcal{N}(\mu,\Sigma)$ represents a multivariate Gaussian with mean $\mu$ and covariance $\Sigma$, $\text{diag}(e)$ is the square diagonal matrix with diagonal elements $e$.}.

This system has the objective to follow a circular orbit while avoiding multiple obstacles with probability $1-\epsilon$, where $\epsilon=15\%$. A CBF controller is well-suited to this objective for a deterministic system. However, the stochastic nature of the dynamics and the partial access to the states hinder the immediate implementation of a CBF-based control. Instead, we pick a surrogate deterministic dynamics model
\begin{equation}
\begin{aligned} \label{eq:uniCycleDeterministic}
\dot{p}_x = V_x,\, \dot{p}_y = V_y,
\end{aligned}
\end{equation}
where $p_x=p_x(t)$ and $p_y=p_y(t)$ are the coordinates of the vehicle, $V_x$ and $V_y$ are the velocity inputs.

\subsection{Level 1: baseline control $u_0$ for asymptotic behavior}
Suppose we use a baseline safety-agnostic controller, $u_0$, based on a vector field navigation approach \cite{nelson2007vector}. The vectors of this field represent the desired heading angle to follow an orbit. In this example, the path is a clockwise circular orbit with radius $r = 10\,m$ and a center at the origin. The baseline controller is then derived according to this vector field. The desired heading angle is defined as
\begin{align}
    \theta^d = \gamma - \frac{\pi}{2} - \tan^{-1}\left (k(d-r)\right), \label{eq:vecField}
\end{align}
where  $d$ and $\gamma$ are, respectively, the distance and the angular position of the vehicle with respect to the orbit's center, $\gamma = \atantwo(p_y,p_x)$. The angular position angle is measured from the positive $x-$axis, and in a counter-clockwise direction. We pick a gain $k=0.3$. The resulting vector field is visualized in Figure~\ref{fig:vectorField}.
\begin{figure}[h]
\centering 
\includegraphics[width=2.2in,height=2.2in]{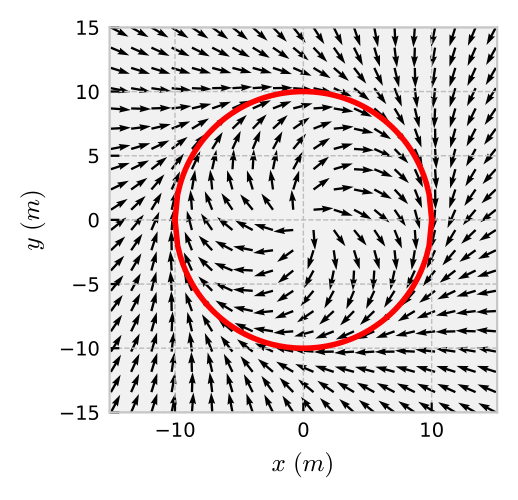} \caption{The vector field defined by a constant magnitude (speed) and an angle given by \eqref{eq:vecField}. The red curve is a circular orbit of $10\,m$ radius. \label{fig:vectorField}}
\end{figure}

The baseline control can be defined as $u_0 = (V \cos(\theta^d), V \sin(\theta^d))$, where $V=5\,m/s$ and $\theta^d$ as in \eqref{eq:vecField}.

\subsection{Level 2: CBF control $u^\star$ for obstacle avoidance}
While following the orbit, the vehicle has to avoid three obstacles, represented as circular objects with centers $\{(9,-5),(-10,-9),(-7,10)\}\,m$ and radii $\{3,4,3\}\,m$, respectively. For the multiple obstacles, we rely on the concept of the Boolean compositional nonsmooth barrier function \cite{glotfelter2017nonsmooth}.  The quadratic program, analogous to the one in Proposition~3 therein,
\begin{equation}
\begin{aligned} \label{CBF_QP_example}
u^\star(p_x,p_y) &= \argmin_u \lVert u - u_0\rVert^2,\\
& \hskip -10mm\text{s.t.:}\frac{\partial h_m (p_x,p_y)}{\partial (p_x,p_y)} ^T u \geq -\alpha(h_m(p_x,p_y)),\, m=1,2,3.
\end{aligned}
\end{equation}
The functions $h_m(x,y)$ correspond to each obstacle, and defined as
\begin{equation*}
    h_m(p_x,p_y) = (p_x-p_x^m)^2 + (p_y-p_y^m)^2 - (r^m)^2,
\end{equation*}
where $(p_x^m,p_y^m),r^m$ are the coordinates and radius of obstacle $m$. We pick the class $\mathcal{K}$ function $\alpha(h_m) = 0.05 h_m$ to induce a gradual variation in the vector field around these obstacles. The quadratic program is then solved for a grid over $[-15,15] \times [-15,15]$ in the state-space. The resulting $u^\star=(u^\star(1),u^\star(2))^T$ is mapped back to a desired heading angle, we denote it $\theta^\star=\atantwo(u^\star(2),u^\star(1))$. The resulting vector field is shown in Figure~\ref{fig:vectorField_obstacles}. This angle, since acquired using \eqref{CBF_QP_example}, is a function of the vehicle's position, i.e. $\theta^\star = \theta^\star(p_x,p_y)$.
\begin{figure}[h]
\centering 
\includegraphics[width=2.2in,height=2.2in]{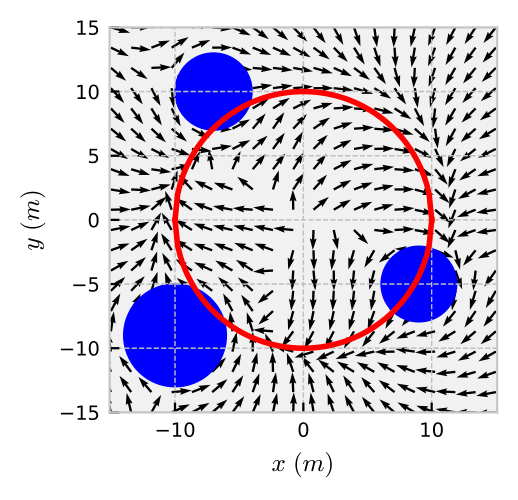} \caption{The vector field defined by $\theta^\star$, induced from the quadratic program \eqref{CBF_QP_example}. The red curve is a circular orbit of $10\,m$ radius, and the blue disks are the obstacles to be avoided by the vehicle. \label{fig:vectorField_obstacles}}
\end{figure}

\subsection{Level 3: CIDA control $u^\#$ for the stochastic system}
For the state-space system \eqref{eq:stateDynamics},\eqref{eq:measurementDynamics}, let $\hat \zeta_k=(\hat x_k, \hat y_k, \hat \theta_k)^T$ denote the sample average of the particle filtered density at time $k$. The number of particles in the filter we use is $L=1000$. We define the \textit{certainty equivalence controller} (CE), using the error signal between $\theta^\star(\hat x_k, \hat y_k)$ and $\hat \theta_k$, by $\omega_k^{\text{CE}} = \text{sat}\left ( 5(\theta^\star-\hat \theta_k),-\pi,\pi\right)$. This function saturates the first argument above at $\pi$ and below at $-\pi$, such that $\omega_k^{\text{CE}} \in \mathbb U=[-\pi,\pi],\,\forall k$. The result of applying this controller is depicted in Figure~\ref{fig:CBF_CE}, which shows the resulting true simulated trajectory that violates the constraints $54$ times over $750$ time-steps.
\begin{figure}[h]
\centering 
\includegraphics[width=2.2in,height=2.2in]{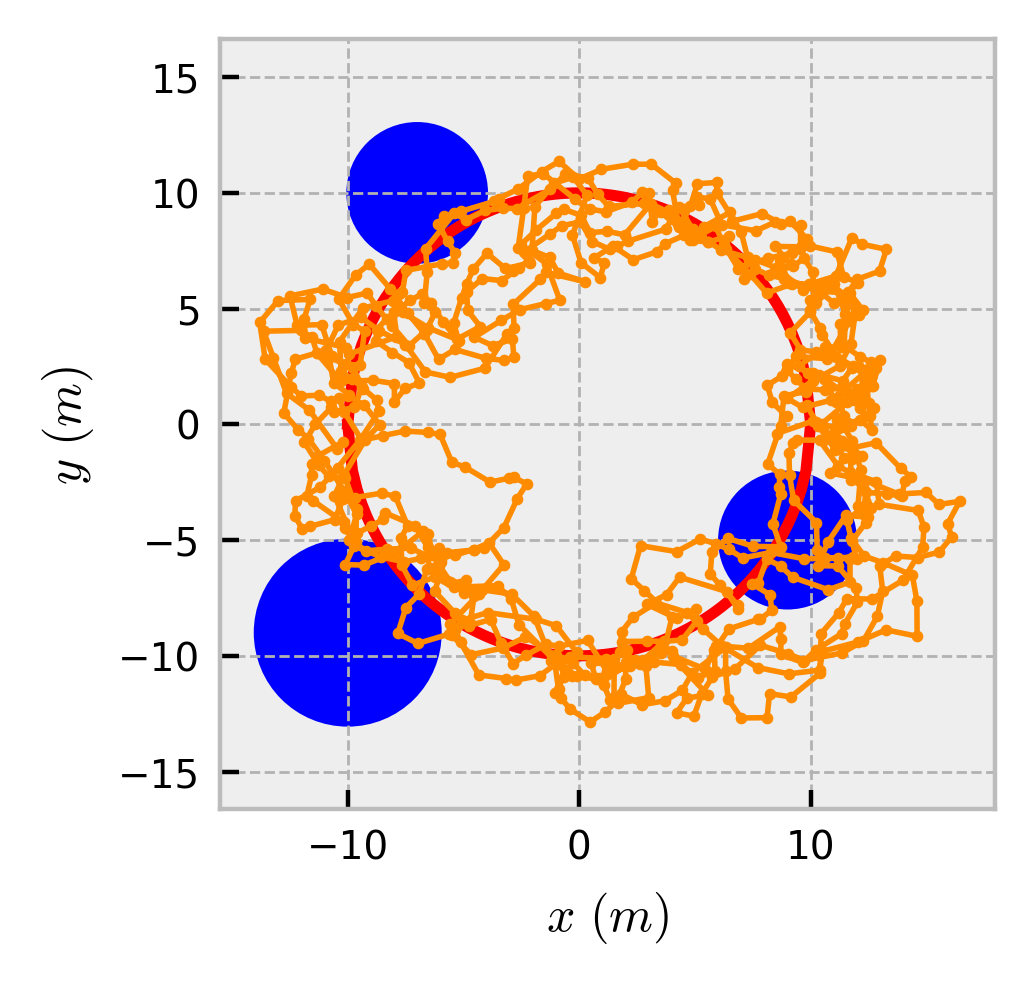} \caption{The true simulated vehicle's position for $750$ time-steps, when the certainty equivalence control is applied in feedback.}\label{fig:CBF_CE}
\end{figure}

Next, we apply our algorithm CIDA with: $\omega^\#=\pi(\zeta_0')$ where $\pi(\zeta_k') = \text{sat}\left ( 5(\theta^\star(x_k',y_k')-\hat \theta_k'),-\pi,\pi\right)$, $M=R=150,\,N=10,\,\gamma=1$, and statistical violation tolerance $\alpha=5\%$. According to Theorem~\ref{Theorem:NumberOfSamples}, these values enforce a probabilistic violation tolerance $\leq \epsilon=15\%$, and with confidence of at least $1-\delta = 95\%$. To avoid running into infeasibility issues, hard constraints are replaced by soft ones and the control sequence $\{u_k^\star\}_{k}$ in CIDA with the least $\max_k a_k$ is chosen.

Figure~\ref{fig:CBF_CIDA} illustrates a simulated trajectory using $\omega_k^\#$ in feedback. Compared to the certainty equivalence control (Figure~\ref{fig:CBF_CE}), CIDA takes into account the original dynamics \eqref{eq:NonSys} and their stochastic nature, and hence, is more capable of enforcing safety. With CIDA, the true simulated state violated the constraints for $15$ times compared to $54$ in CE, resulting in a controller that is $3.5$ times safer than the standard method.\footnote{The associated code:  \href{https://github.com/msramada/Control-Importance-Distribution-Algorithm-CIDA-}{https://github.com/msramada/Control-Importance-Distribution-Algorithm-CIDA-}}
\begin{figure}[h]
\centering 
\includegraphics[width=2.2in,height=2.2in]{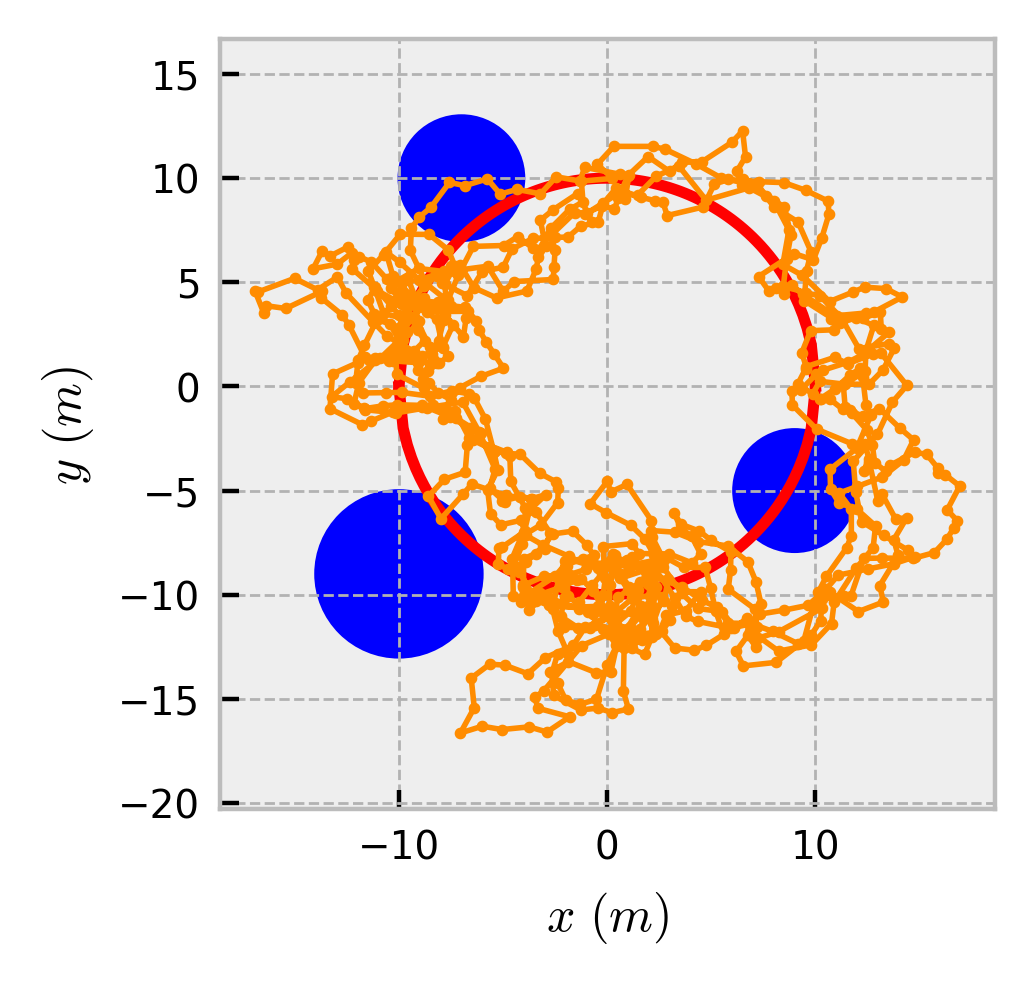} \caption{The true simulated vehicle's position for $750$ time-steps, when CIDA is implemented and the control $\omega^\#$ is applied in feedback.}\label{fig:CBF_CIDA}
\end{figure}

The simulations were implemented using \python, an interpretted programming language, on an M1-chip 2021 MacBook Pro with 16.00 GB of RAM. The average running time for each time-step of CIDA is $\approx 7$ seconds, compared to 0.07 seconds for CE. This can be vastly reduced via optimized parallel computing.
\section{Conclusion}
\label{section: Conclusion} 
This paper presents a method for generating probabilistically safe control for nonlinear stochastic state uncertain systems. Though generally intractable, we show that by limiting the search space based on a surrogate deterministic controller, a safe controller over a prediction horizon can be efficiently generated and guaranteed by a predefined margin $1-\epsilon$. We demonstrate the tradeoff between computational speed and safety compared with a standard certainty equivalence method. Algorithm~\ref{algorithm:CIDA} (CIDA) bears a passing resemblance to both model reference adaptive control \cite{parks1966liapunov}, and importance sampling from Monte Carlo integration literature \cite{doucet2000sequential}. This is due to the use of focused sampling of the control space, guided by a surrogate model that is both: ``close enough'' to the original stochastic dynamics and ``simple enough'' to admit nonlinear control design such as a CBF. However, these notions of closeness/simplicity have to be further investigated and, possibly, made quantifiable. Moreover, the recursive feasibility/safety condition is not discussed here; this is challenging in general for scenario/sampling-based methods, even with linearity and convexity assumptions adopted therein \cite{campi2019scenario}. More investigations of this condition is still required to extend the guarantees offered by these methods.

\bibliographystyle{IEEEtranS}
\bibliography{References}

\end{document}